\documentclass[envcountsame]{llncs}
\usepackage{amsmath,amssymb,amsbsy,amsfonts,array,latexsym,
               amsopn,amstext,amsxtra,euscript,amscd,commath,mathtools,physics,hyperref}
\usepackage{color,soul}
\usepackage[english]{babel}
\usepackage[noadjust]{cite}
\usepackage{algorithm}
\usepackage{algcompatible}
\usepackage{caption}
\usepackage{tikz}
\usetikzlibrary{quantikz}

\newcolumntype{P}[1]{>{\centering\arraybackslash}p{#1}}

\def \F {{\mathbb F}}

\def \Z {{\mathbb Z}}

\newtheorem{prop}{Proposition}
\newtheorem{defn}{Definition}

\def\cO{{\mathcal O}}

\def\F{{\mathbb F}}
\def\Z{{\mathbb Z}}

\def\F{{\mathbb F}}

\def\R{{\mathbb R}}
\def\Z{{\mathbb Z}}

\def\R{{\mathbb R}}
\def\C{{\mathbb C}}

\def\00{{\bf 0}}
\def\11{{\bf 1}}
\def\+{\oplus}

\def \F {{\mathbb F}}

\def \Z {{\mathbb Z}}

\newcommand{\bias}[1]{\mathbf{B}[#1]}
\newcommand{\mean}[1]{\mathbf{E}[#1]}

\def\wt{{\rm wt}}

\setlength{\textwidth}{5.7in}
\setlength{\oddsidemargin}{.4in}
\setlength{\evensidemargin}{.4in}
\setlength{\topmargin}{-.5in}
\setlength{\textheight}{8.5in}

\usepackage[parfill]{parskip}
\captionsetup{width = \linewidth}

\begin{document}

\title{\bf  A quantum algorithm to estimate the closeness to the Strict Avalanche criterion in Boolean functions}

\author{ C.~A.~Jothishwaran\inst{1}, Abhishek Chakraborty\inst{3}, Vishvendra Singh Poonia\inst{1}\\ Pantelimon St\u anic\u a\inst{3},
Sugata~Gangopadhyay\inst{2}}
\institute{Department of Electronics and Communication Engineering\\
\and
Department of Computer Science and Engineering,\\
Indian Institute of Technology Roorkee, Roorkee 247667, INDIA\\ \email{\{jc\_a, vishvendra\}@ece.iitr.ac.in, sugata.gangopadhyay@cs.iitr.ac.in}\\
\and Department of Physics and Astronomy \\
University of Rochester, Rochester, NY 14627--0171, USA \\ \email{achakra9@ur.rochester.edu} \\
\and Department of Applied Mathematics  \\
Naval Postgraduate School, Monterey, CA 93943--5216, USA \\ \email{pstanica@nps.edu}}

\maketitle

\abstract{We propose a quantum algorithm (in the form of a quantum oracle) that estimates the closeness of a given Boolean function to one that satisfies the ``strict avalanche criterion'' (SAC). This algorithm requires $n$ queries of the Boolean function oracle, where $n$ is the number fo input variables, this is fewer than the queries required by the classical algorithm to perform the same task. We compare our approach with other quantum algorithms that may be used for estimating the closeness to SAC and it is shown our algorithm verifies SAC with the fewest possible calls to quantum oracle and requires the fewest samples to for a given confidence bound.}

\noindent {\bf Keywords:} {Boolean functions, Fourier spectrum,  strict avalanche criterion, quantum algorithms}

\section{Introduction}

An $n$-variable Boolean function $F$ is a function from $\F_2^n$ to $\F_2$. 
The set of all such functions is denoted by  $\mathfrak B_n$. We associate to each function 
$F \in \mathfrak B_n$ its character form $f : \F_2^n \rightarrow \R$, 
defined by $f(x) = (-1)^{F(x)}$, for all $x \in \F_2^n$ (we use capitals letters for classical Boolean functions and lower cases for their signatures). In this article, 
abusing notation, we refer to the character forms $f$ as Boolean functions and 
go to the extent of writing $f \in \mathfrak B_n$, whenever $F \in \mathfrak B_n$, surely, 
if there is no danger of confusion. For any $x, y \in \F_2^n$, the inner product is
$x \cdot y = \sum_{i = 1}^n x_i y_i$, where the sum is over $\F_2$. The (Hamming) 
weight of a vector $u=(u_1,\ldots,u_n) \in \F_2^n$ is $\wt (u) = \sum_{i = 1}^n u_i$, 
where the sum is over~$\Z$. The weight of a Boolean function $F \in \mathfrak B_n$, 
or equivalently $f \in \mathfrak B_n$ is the cardinality
$\wt (F) = \abs{ \{x \in \F_2^n: F(x) \neq 0 \}}$, or equivalently
$\wt (f) = \abs{ \{x \in \F_2^n: f(x) \neq 1 \}}$. 
We define the Fourier coefficient of $f$ at $u \in \F_2^n$ by 
$\widehat f(u) = 2^{-n} \sum_{x \in \F_2^n} f(x)(-1)^{u \cdot x}$.
Recall the well known Parseval's identity $\sum_{u \in \F_2^n}\widehat f(u)^2 = 1$.
The derivative of $f \in \mathfrak B_n$ at $c \in \F_2^n$ is the function
$\Delta_c f(x) = f(x)f(x+c)$ (or, equivalently, $\Delta_c F(x) = F(x+c)+F(x)$),  for all  $x,c \in \F_2^n$.

\section{Strict Avalanche Criterion}

A Boolean function $f \in \mathfrak{B}_n$ satisfies the strict avalanche criterion (SAC) if the probability of the function changing its value when a single input value is flipped is exactly $0.5$ that is, the derivative $\Delta_cF(x)$ is a balanced function for all $c \in \F_2^n$ such that $\wt(c) = 1$. 
We refer to Budaghyan~\cite{Bud14}, Carlet~\cite{CH1}, and Cusick and St\u anic\u a~\cite{Pante} for detailed discussions on SAC and other cryptographic properties of Boolean functions. 
The Fourier transform of the function $\widehat{f}(w)$ satisfies the following relation
\begin{equation}\label{SAC_criterion}
\sum_{w \in \F_2^n} \; (-1)^{w \cdot c} \; \widehat{f}(w)^2 \,=\, 0, \mbox{ for all } c \in \F_2^n 
\mbox{ such that } \wt(c) = 1.
\end{equation}

As there are $n$ possible strings $c$ of weight~$1$, the above expression represents $n$ different relations simultaneously satisfied by $\widehat{f}(w)$. These relations can also be written in terms of  $w_i$, the $i^{\text{th}}$ bit of $w$ as follows:
\begin{equation}\label{SAC_criterion_2}
\begin{split}
&\; \sum_{w \in \F_2^n} \; (-1)^{w_i} \; \widehat{f}(w)^2 \,=\, 0, \forall\; i \in [n] \,:\, [n] \equiv \{1,2,\ldots,n\}, \\
\mbox{i.e., } &\; \sum_{w \in \F_2^n \vert w_i = 0} \widehat{f}(w)^2 \,=\, \sum_{w \in \F_2^n \vert w_i = 1} \widehat{f}(w)^2 \,=\, \frac{1}{2},\, \forall\; i \in [n].
\end{split}
\end{equation} 
The last relation is obtained by using Parseval's identity. SAC can also be defined in terms of the autocorrelation coefficients of the Boolean function $f(x)$ given by:

\begin{equation}\label{autocorrelation}
\breve{f}(a) = \sum_{x \in \F_2^n} \, f(x)f(x+a) \;\;\forall\; a \in \F_2^n
\end{equation}

By the above definition, any Boolean function $f$ satisfying SAC will have
\begin{equation}\label{autosac}
\breve{f}(a) = 0 \;\forall\; a \in \F_2^n : \text{wt}(a) = 1
\end{equation} 

Finally, the autocorrelation coefficients are related to the Fourier coefficient through the inverse transform $f(x) = \sum_{u \in \F_2^n} \widehat{f}(u) (-1)^{u \cdot x}$ as follows

\begin{equation}\label{auto-fourier}
\breve{f}(a) \,=\, 2^n \,\sum_{u \in \F_2^n} \widehat{f}(u)^2 (-1)^{u \cdot a}
\end{equation}

Therefore, the autocorrelation condition for the criterion is equivalent to the first result in \autoref{SAC_criterion_2}. A classical algorithm for estimating the closeness to SAC functions is given in the following section. 

\section{Classically estimating closeness the to SAC functions}\label{class_alg}

Since the strict avalanche criterion (SAC) for a function $f \in \mathfrak{B}_n$ is defined in terms of the balanced nature of the derivatives, $\Delta_a f$ corresponding to all $c \in \F_2^n \,:\, \text{wt}(a) = 1$. The closeness of a given function to a SAC function can be estimated by estimating the bias of the relevant derivative functions. The bias of $g \in \mathfrak{B}_n$ can be in terms of the expectation value $\mean{g}$ under uniform sampling of the inputs. The bias can also be defined in terms of the weight of a Boolean function. The following are the definitions of the bias $\bias{g})$:

\begin{equation}
\begin{split}
\bias{g} &\,=\, 2^n\,\mean{g} \,=\,2^n\,\qty(\Pr[g(x) = 1] - \Pr[g(x) = -1]) \\ 
&\,=\, \qty[2^n - 2 \cdot \text{wt}(g)]
\end{split}
\end{equation}\label{bias}

In general $-2^n \leq \bias{g} \leq 2^n$, for the case of a balanced function, $\bias{g} = 0$. Algorithm~\ref{algo-class} can be used to test the closeness of a given function $f \in \mathfrak{B}_n$ to a SAC function.

{
\center
\begin{algorithm}[htb]
\centering
\begin{algorithmic}[1]
\STATEx{Input: Oracle access to the Boolean function $f$ }
\STATE{Choose $a \in \F_2^n$ such that $\wt(a) = 1$. }
\STATE{Take $m$ uniformly random samples of $x \in \F_2^n$ listed as $\qty{x^{(1)},x^{(2)},\ldots,x^{(m)}}$.} 
\STATE{Calculate the sample mean of the function $f(x)f(x+a)$, $\epsilon_a$ given by:
\begin{equation*}
s \,=\, \frac{1}{m}\,\sum_{i = 1}^{m}\, f(x^{(i)})f(x^{(i)}+a)
\end{equation*}}
\STATE{Estimate the bias as, $\tilde{\epsilon}_a \,=\, 2^n s $}.
\STATE{Repeat for all allowed values of $a$.} 
\STATE{Evaluate the quantity $$\tilde{\epsilon} \,=\, \frac{1}{2}\sum_{\substack{a \in \F_2^n \\
\wt(a) = 1}} \epsilon_a $$}
\STATE{Return $\epsilon$ as the estimated distance to a SAC function}
\end{algorithmic}
\caption{Estimating the closeness to a SAC function.}
\label{algo-class}
\end{algorithm}
}

\begin{theorem}
If the bias of the derivative function $f(x)f(x+a)$ for all $\wt{a} = 1$ is given by $\epsilon_a$, then $f$ is at most $\epsilon$ distant from a SAC function, where $\epsilon$ is defined as:  

\begin{equation}
\epsilon \,=\, \frac{1}{2}\sum_{\substack{a \in \F_2^n \\
\wt(a) = 1}} \epsilon_a 
\end{equation}\label{closeness}
\end{theorem}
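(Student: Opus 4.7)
The plan is to read the statement as asserting a bound on the Hamming distance from $f$ to the nearest SAC function, and to exhibit such a function by local modifications of $f$; I take absolute values throughout since the $\epsilon_a$'s may be negative.

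The key observation is that $\Delta_a f(x) = f(x)f(x+a) = \Delta_a f(x+a)$, so $\Delta_a f$ is constant on each pair $\{x, x+a\}$. Consequently, flipping a single value $f(x_0)$ flips $\Delta_a f$ at both $x_0$ and $x_0+a$ in the same direction, and shifts $\bias{\Delta_a f}$ by exactly $\pm 4$. In particular, balancing a single derivative $\Delta_{e_i}f$ in isolation costs precisely $|\epsilon_{e_i}|/4$ single-point flips of $f$; if the $n$ coordinate directions could be handled independently, the total cost would be $\tfrac14\sum_a \epsilon_a$, and the factor-two slack in the statement is there to absorb the interactions between directions.

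The main obstacle, and the heart of the proof, is that a flip chosen to reduce $|\epsilon_{e_i}|$ simultaneously perturbs every $\bias{\Delta_{e_j}f}$ for $j\neq i$ by $\pm 4$. I would handle this inductively: process the $n$ weight-one vectors sequentially, and show that after balancing the first $k$ derivatives, the accumulated perturbation of each remaining bias is no worse than the work already done, so that the flip count telescopes to at most $\tfrac12\sum_a \epsilon_a$. An alternative combinatorial route is to identify, for each weight-one $a$, a set of $|\epsilon_a|/2$ ``excess'' positions of $\Delta_a f$ and to correct each by flipping $f$ at one endpoint of the corresponding pair; bounding the union of these correction sets then yields total Hamming perturbation at most $\tfrac12\sum_a \epsilon_a$. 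The delicate step, common to both routes, is verifying that the resulting function actually satisfies SAC — not merely that each derivative has been improved in isolation — which is where the inductive accounting, or a careful linear-algebraic analysis of the flip-indicator vector over $\Z$, carries the real burden.
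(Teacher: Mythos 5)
Your proposal is a plan rather than a proof: both routes you sketch stop exactly at the step you yourself flag as delicate, namely showing that the single-direction corrections can be combined into one function that is simultaneously balanced in all $n$ coordinate directions with total Hamming cost at most $\tfrac{1}{2}\sum_a\abs{\epsilon_a}$. Neither the sequential ``telescoping'' argument nor the union-of-correction-sets argument is actually carried out, and no invariant is exhibited that controls how a flip made to balance $\Delta_{e_i}f$ perturbs the remaining biases $\breve{f}(e_j)$ for $j\neq i$. As it stands, the conclusion is asserted rather than derived, so there is a genuine gap.

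That said, the paper's own proof has the same gap and is less careful than your sketch on the points you do settle. The paper asserts that flipping values at $\epsilon_a/2$ points balances the derivative in direction $a$, and then simply states that when this ``can be simultaneously done for all $n$ allowed values of $a$'' the function becomes SAC --- with no justification of the simultaneity and no accounting for cross-direction interference. Your observations that $\Delta_a f$ is constant on each pair $\{x,x+a\}$, that a single flip of $f$ therefore moves $\bias{\Delta_a f}$ by exactly $\pm 4$ (so an isolated direction costs only $\abs{\epsilon_a}/4$ flips of $f$, not $\epsilon_a/2$), and that the $\epsilon_a$ must be taken in absolute value, are all correct and all absent from the paper. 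So you have correctly located the real burden of the theorem; what is missing, in your write-up and in the paper's, is the argument that discharges it.
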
\label{class-test}

\begin{proof}
It can be readily noted that the bias, $\bias{f(x)f(x+a)}$ is numerically equal to the autocorrelation $\breve{f}(a)$. In the case of a SAC function $g$, $\breve{g}(a) = 0$ for all relevant values as mentioned in \autoref{autosac}.

This implies that for each allowed value of $a$, the value of the function $f(x + a)$ must be flipped at exactly $\epsilon_a/2$ points for the bias of the derivative $\bias{f(x)f(x+a)}$ to vanish. When this process can be simultaneously done for all $n$ allowed values of $a$, the given Boolean function can be converted to a SA function. 

Since the number of value inversion of the function is given by $\epsilon_a/2$ for each allowed $a$, the maximum number of points where the value of $f$ may need be changed is given by $\epsilon$ where,

$$\epsilon \,=\, \frac{1}{2}\sum_{\substack{a \in \F_2^n \\
\wt(a) = 1}} \epsilon_a $$

Therefore, there exists a Boolean function $g$ that satisfies the strict avalanche criterion, that is at most $\epsilon$-distant from the given Boolean function $f$. \qed   
\end{proof}

Theorem~\ref{class-test} establishes that Algorithm~\ref{algo-class} can be used to estimate the closeness of a given Boolean function to some SA function. However, the issue of the accuracy of the bias estimates must be addressed and this will be done in \autoref{complexity}. 

\section{Quantum information: definitions and notation} \label{quant_def}

In this section, we will introduce some notation that we use throughout the paper. For an introduction to quantum computing, we refer to Rieffel and Polak \cite{Rieffel}, or Nielsen and Chuang~\cite{NC10}.
The fundamental unit of quantum information is called a qubit. The states of a qubit is 
denoted by $\ket{\psi} = a \ket{0} + b \ket{1}$ where $a, b \in \C$ such that $\abs{a}^2 + \abs{b}^2 =1$.   
If we measure the qubit $\ket{\psi}$ using the standard basis $\{\ket{0}, \ket{1}\}$
the probabilities of observing $\ket{0}$ and $\ket{1}$ are $\abs{a}^2$ and $\abs{b}^2$, respectively. 

In the following, we will use the conventional notation $\ket{a} \ket{b}:=\ket{a} \otimes \ket{b}$,  
or $\ket{ a  b}:=\ket{a} \otimes \ket{b}$. A state on $n$ qubits can be represented as a 
$\C$-linear combination of the vectors of the standard basis 
$\ket\psi=\sum_{x\in\F_2^n}a_x\ket x$, where $a_x\in\C,$ for all $x\in \F_2^n$, and $\sum_{x\in\F_2^n}|a_x|^2=1$; the set of vectors ${\ket x}$ forms a basis for the $n$ 
qubit states and is referred to as the computational basis.
Let $\ket{0_n}$ be the quantum state associated with the zero vector in $\F_2^n$. 
The vectors $\ket + = \frac{\ket 0 + \ket 1}{\sqrt 2}$ and $\ket - = \frac{\ket 0 - \ket 1}{\sqrt 2}$
 define the Hadamard basis for single qubit states. 
 
Any Boolean function $F \in \mathfrak B_n$ can be implemented as a bit oracle implementation 
$U_F$, so that:
\begin{equation}
\label{bit-oracle}
\ket x \ket \varepsilon \xrightarrow{U_F} \ket x \ket {\varepsilon + F(x)}.
\end{equation}
Here, $x \in \F_2^n$ and $ \varepsilon \in \F_2$.  
If the target qubit for $U_F$ is $\ket -$, then $$\ket x \ket - \xrightarrow{U_F} (-1)^{F(x)}\ket x \ket -.$$ 
This gives an alternative implementation of a Boolean function oracle known as the phase oracle implementation of the function~$F$. 
The number of input qubits to a quantum oracle of Boolean function in $\mathfrak{B}_n$ is $n+1$, any internal ancillary qubits used are not required for the analysis of the algorithm and are not included in this count. 

If the $i^{\text{th}}$ input qubit acts as the control qubit and the target qubit is the same as the target of the oracle, the gate is represented by $CZ^{i}$. In the phase oracle representation, the target qubit is in the $\ket{-}$ state the action of $CZ^{i}$ is as follows,
\begin{equation}\label{CZ_act}
\ket{x}\ket{-} \,\xrightarrow{CZ^{i}}\, \ket{x}\mqty(\frac{\ket{0} - (-1)^{x_i} \ket{1}}{\sqrt{2}}).
\end{equation}  
This implies if the $x_i =1$ the target qubit turns from $\ket{-}$ to $\ket{+}$.

Let $I = \begin{pmatrix}1&0\\0&1 \end{pmatrix} $ be the $2\times 2$ identity matrix, and $H = \frac{1}{\sqrt{2}} \begin{pmatrix*}[r] 1 & 1 \\ 1 & -1 \end{pmatrix*}$ be the $2\times 2$ Hadamard matrix. The tensor product of matrices is denoted by $\otimes$. The matrix $H_n$ is recursively defined as: 
\begin{equation}
\label{def-hadamard}
\begin{split}
H_2 &= H \otimes H, \\
H_n &= H \otimes H_{n-1}, \mbox{ for all } n \geq 3.
\end{split}
\end{equation}
Note that, for $x\in\F_2^n$, $H_n\ket x=2^{-\frac{n}{2}}\sum_{x'\in\F_2^n}(-1)^{x\cdot x'}\ket{x'}$.

\section{Quantum algorithm to verify the strict avalanche criterion}

The SAC can be verified for a function $F \in \mathfrak B_n$ through the following quantum algorithm, henceforth referred to as the QSAC algorithm. The initial state of the $n+1$ qubits is $\ket{0_n}\ket{0}$. In the following expressions, the symbol ``$\circ$'' is used to represent matrix multiplication:

\allowdisplaybreaks
\begin{eqnarray}
\label{main-algo-1}
\ket{0_n}\ket{0} &&\xrightarrow{H_n\otimes(Z \circ H)} 2^{-n/2}\sum_{x \in \F_2^n} \; \ket{x}\ket{-} \nonumber\\
&&\xrightarrow{U_F \otimes I} 2^{-n/2}\sum_{x \in \F_2^n} \; f(x)\ket{x}\ket{-} \nonumber\\
&&\xrightarrow{H_n\otimes I} 2^{-n}\sum_{x \in \F_2^n} \,f(x)\, \sum_{w \in \F_2^n} \; (-1)^{x \cdot w}\ket{w}\ket{-} \nonumber\\
&&\equiv 2^{-n}\sum_{w \in \F_2^n}  \sum_{x \in \F_2^n} \; f(x)(-1)^{x \cdot w}\ket{w}\ket{-} \,=\, \sum_{w \in \F_2^n} \; \widehat{f}(w)\ket{w}\ket{-}\\
&&\xrightarrow{CZ^{i}} \sum_{w \in \F_2^n} \; \widehat{f}(w)\ket{w}\qty(\frac{\ket{0} - \,(-1)^{w_i}\ket{1}}{\sqrt{2}}) \nonumber\\
&&\equiv \sum_{w \in \F_2^n \vert w_i = 1} \; \widehat{f}(w)\ket{w}\ket{+} \;+ \sum_{w \in \F_2^n \vert w_i = 0} \; \widehat{f}(w)\ket{w}\ket{-} \nonumber\\
&&\xrightarrow{I_n \otimes H} \sum_{w \in \F_2^n \vert w_i = 1} \; \widehat{f}(w)\ket{w}\ket{0} \;+ \sum_{w \in \F_2^n \vert w_i = 0} \; \widehat{f}(w)\ket{w}\ket{1}. \nonumber
\end{eqnarray}

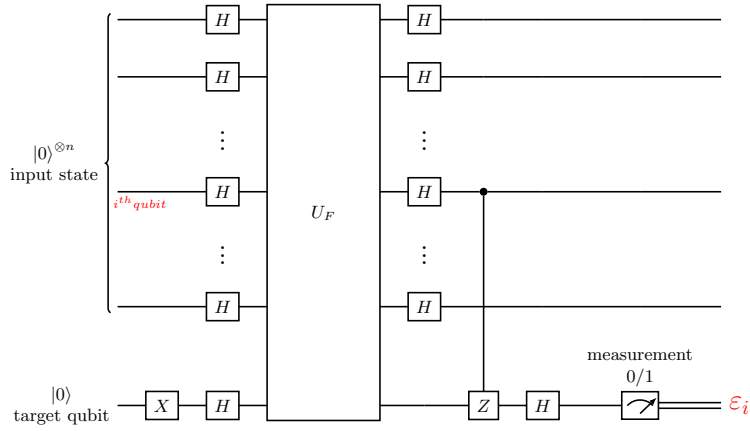
\begin{figure}[htb]
\begin{center}
\scalebox{0.75}{
\begin{quantikz}
& \lstick[wires=6]{$\ket{0}^{\otimes n}$ \\ \small{input state}} & \qw  & \gate{H} & \gate[wires=7, nwires={3,5}][2cm]{U_F} & \gate{H} & \qw & \qw & \qw & \qw  \\
&  & \qw  & \gate{H} &  & \gate{H} & \qw & \qw & \qw & \qw  \\
&  &  & \Large{\vdots} &  & \Large{\vdots} &  &  &  &   \\
&  & \qw{\textcolor{red}{i^{th} qubit}}  & \gate{H} &  & \gate{H} & \ctrl{3} & \qw & \qw & \qw  \\
&  &  & \Large{\vdots} &  & \Large{\vdots} &  &  &  &   \\
&  & \qw  & \gate{H} &  & \gate{H} & \qw & \qw & \qw & \qw  \\
& \lstick{$\ket{0}$ \\ \small{target qubit}} & \gate{X} & \gate{H} &  & \qw & \gate{Z} & \gate{H} & \meter{measurement \\ $0/1$} & \cw \rstick{\Large{\textcolor{red}{$\varepsilon_i$}}}  \\
\end{quantikz}}
\caption{\footnotesize{Circuit diagram for a single iteration of the algorithm. The outcome, $\varepsilon_i$ takes values 0 or 1. If the function satisfies SAC, the mean value of the outcome is 0.5.}}
\label{qsac_circuit}
\end{center}
\end{figure}

The probability that the outcome of a standard measurement on the target qubit $\ket{\varepsilon}$ will yield $\ket{0}$  or $\ket{1}$ is  given by:
\begin{equation}\label{sac_result}
\begin{split}
\Pr[\varepsilon = 0] &\,=\, \sum_{w \in \F_2^n \vert w_i = 1} \widehat{f}(w)^2, \\
\Pr[\varepsilon = 1] &\,=\, \sum_{w \in \F_2^n \vert w_i = 0} \widehat{f}(w)^2,
\end{split}
\end{equation}
for a function satisfying SAC these two probabilities are $\frac{1}{2}$ and the expectation value $\ev{\varepsilon}$ is also $\frac{1}{2}$. it should be noted that the same algorithm must be iteratively repeated $n$ times with a different gate  $$ CZ^i \forall\; i \in [n], $$ and yield the same value for $\ev{\varepsilon}$ for the target qubit. The circuit representation of an iteration of the algorithm is given in \autoref{qsac_circuit}. 

It can be seen that this algorithm is a quantum implementation of the expression in \autoref{SAC_criterion_2}. The quantity $\ev{\varepsilon_i}$ obtained from the $i^{\text{th}}$ iteration of the quantum algorithm is related to results from the iteration of the classical algorithm where $a_i = 1$ using \autoref{auto-fourier} and can be derived as follows:

\begin{equation}\label{class-quant}
\begin{split}
\breve{f}(a) &\,=\, 2^n \,\qty(\sum_{\substack{w \in \F_2^n  \\ \wt(a) = 1 \vert a_i = 1}} \widehat{f}(w)^2 (-1)^{w_i})\\
&\,=\, 2^n \,\qty(\Pr[\varepsilon_i = 0] - \Pr[\varepsilon_i = 1]) \\
&\,=\, 2^n \,\qty(1 - 2\ev{\varepsilon_i}) \\
\Rightarrow\; \epsilon_a &\,=\, 2^n\,\qty(1 - 2\ev{\varepsilon_i}) 
\end{split}
\end{equation}

The above relation can be used to compute the distance of $f$ to a SAC function in a manner identical to Algorithm~\ref{algo-class}. Therefore the quantum algorithm to estimate the closeness of a given function to a SAC function is given as follows:

{
\center
\begin{algorithm}[htb]
\centering
\begin{algorithmic}[1]
\STATEx{Input: Quantum computer with a minimum $n+1$ qubits, Bit oracle implementation to the Boolean function $f$}
\STATE{Initial state: $\ket{0_n}\ket{0}$}
\STATE{Apply the transformations described in \autoref{main-algo-1}/\autoref{qsac_circuit}}
\STATE{Repeat above steps multiple times and measure in order to estimate $\ev{\varepsilon_i}$}.
\STATE{Repeat for all values of $i$.}
\STATE{Evaluate the quantity $$\epsilon \,=\,\sum_{\substack{i \in [n]}} 2^{n-1} \,(1-\ev{\varepsilon_i}) $$}
\STATE{Return $\epsilon$ as the estimated distance to a SAC function} 
\end{algorithmic}
\caption{Quantum algorithm for estimating the closeness to a SAC function.}
\label{algo-quant}
\end{algorithm}
}
  
\section{Alternative quantum algorithms for verifying SAC}

There are several quantum algorithms that can analyse different properties of Boolean functions. This section considers certain suitable candidate algorithms and modifies them to verify SAC. These algorithms are considered as alternate approaches to verifying SAC. It should be noted that none of the candidate algorithms considered in this section have been used to explicitly verify SAC in their respective source materials. 

Additionally, only the quantum circuits corresponding to the algorithm are mentioned here. In all of the following cases, the estimated probability/expectation values can be converted to distance measures to a SAC function, but the proof for them is not explicitly given. 

\subsection{The direct approach}

This algorithm utilizes the fact that $\Delta_c F(x)$ is balanced for all $c \in \F_2^n$ such that $\text{wt}(c) = 1$. This implies the Fourier coefficient of these derivatives at the point $w = 0_n$ is zero. The character form of the derivative $\Delta_c F(x)$ is given by $f'_c(x) = g_c(x)$, the Fourier coefficient can be represented as $\widehat{g}_c(w)$. Therefore,  if $f$ satisfies SAC, then $\widehat{g_c}(0_n) = 0 \; \forall \; c \in \F_2^n \,:\, \text{wt}(c) = 1$. 

The direct algorithm is derived form the Deutsch-Jozsa \cite{Deutsch} algorithm. If the quantum oracle $U_f$ exists, then the oracle for any of the derivatives $U_{f'_c}$ can be constructed using two such oracles and any two $X$ gates applied to the $i^{\text{th}}$ qubit. The resultant state will not contain the vector $\ket{0_n}$ and therefore measurement of this resultant state will never yield the outcome $0_n$. SAC can be verified by repeating this algorithm for each derivative $f'_c(x)$. The circuit diagram for  a single iteration of this algorithm is given in \autoref{direct_algo}.

\begin{figure}[htb]
\centering
\scalebox{0.75}{
\begin{quantikz}
& \lstick[wires=6]{$\ket{0}^{\otimes n}$ \\ \small{input state}} & \qw  & \gate{H} & \gate[wires=7, nwires={3,5}][2cm]{U_f} & \qw & \gate[wires=7, nwires={3,5}][2cm]{U_f} & \qw & \gate{H} & \meter{$0/1$} & \cw  \\
&  & \qw  & \gate{H} &  & \qw &  & \qw & \gate{H} & \meter{$0/1$} & \cw  \\
&  &  & \Large{\vdots} &  &  &  &  & \Large{\vdots} &  &  \\
&  & \qw{\textcolor{red}{i^{th} qubit}}  & \gate{H} &  & \gate{X} &  & \gate{X} & \gate{H} & \meter{$0/1$} & \cw  \\
&  &  & \Large{\vdots} &  &  &  &  & \Large{\vdots} &  &  \\
&  & \qw  & \gate{H} &  & \qw &  & \qw & \gate{H} & \meter{$0/1$} & \cw  \\
& \lstick{$\ket{0}$ \\ \small{target qubit}} & \gate{X} & \gate{H} &  & \qw &  & \qw & \qw & \qw & \\
\end{quantikz}}
\caption{\footnotesize{An iteration of the direct algorithm. If $f(x)$ satisfies SAC, the measurement outcome will never be all $0'$s}.}
\label{direct_algo}
\end{figure}
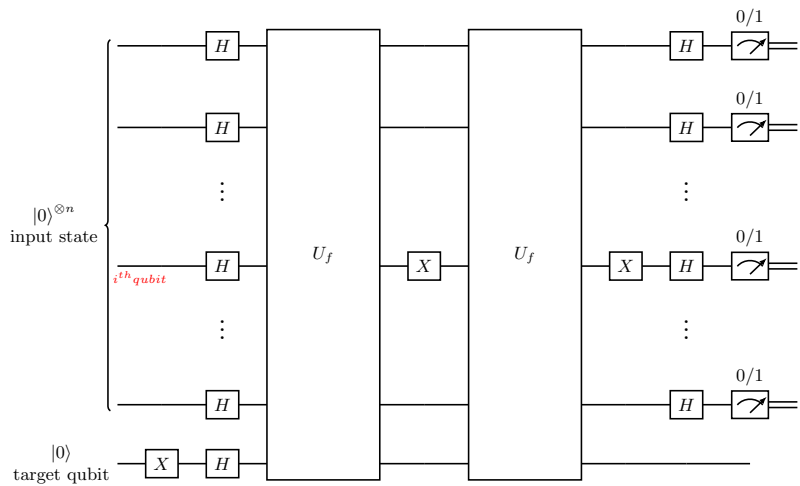

\subsection{The autocorrelation algorithm}

As shown in \autoref{autosac}, SAC can be defined explicitly in terms of $\breve{f}(a)$, the autocorrelation coefficient. for a given Boolean function $F \in \mathfrak B_n$, there are $2^n$ autocorrelation coefficients. Verifying SAC however requires only the $n$ coefficients corresponding to strings of weight $1$. Bera et al.~\cite{BeraMT19} deal with algorithms pertaining to the autocorrelation coefficients of Boolean functions. Two algorithms capable of verifying SAC can be derived form their work.

The first algorithm can be obtained by modifying the ``Higher order Deutsch-Jozsa'' algorithm $HoDJ^1_n$ (defined in Theorem 1 of~\cite{BeraMT19}) so as to include only shifts corresponding to strings of weight $1$. This would require the circuit corresponding to each of these strings is implemented iteratively for each string, similar to the QSAC algorithm. This algorithm turns out to be exactly identical to the direct algorithm previously explained.

The second algorithm can be obtained by modifying the autocorrelation estimator algorithm (Algorithm 2 in \cite{BeraMT19}). The quantum circuit for this algorithm is as shown in \autoref{auto_algo}. This circuit can be used to verify SAC by converting the register $R_1$ to a single qubit register and replacing the $CNOT$ gates shown here with $CNOT^i$ which is defined analogous to the $CZ^i$ gate defined in \autoref{CZ_act}. If the function considered satisfies SAC, then the output qubit $R_2$ will be 0 and 1 with probability $\frac{1}{2}$. The algorithm contains $n$ iterations of the circuit corresponding to the different $CNOT^i$ gates.  

\begin{figure}[htb]
\centering
\includegraphics[scale=0.75]{"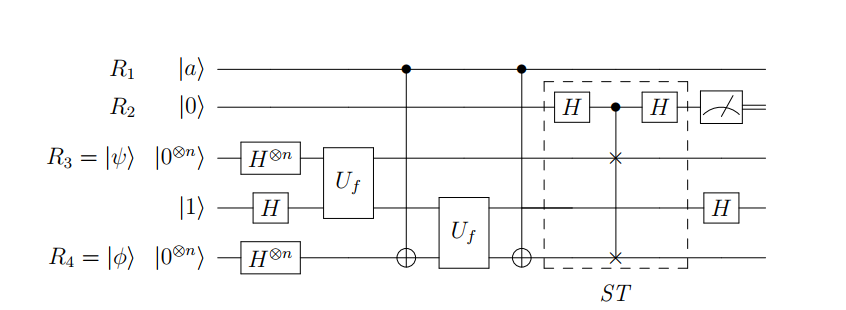"}
\caption{\footnotesize{The autocorrelation estimator algorithm due to Bera et al. The part of the circuit marked as ``ST'' stands for the Swap test. Figure source: \cite{BeraMT19}}}
\label{auto_algo}
\end{figure}

It should be noted that the autocorrelation algorithm has the same form of  output as the QSAC algorithm. However, this algorithm requires over double the amount of qubits on its initial state. The swap test performed before the measurement requires $n$ controlled-SWAP gates and each controlled-SWAP gate is implemented using three Toffoli gates and this introduces further complications in the implementation of the algorithm. These will be discussed during the complexity analysis of the algorithms.          

\subsection{The Forrelation algorithm}

The Forrelation problem defined by Aaronson et al.~\cite{Aaronson} can be used to analyse a number of Boolean function properties. Aaronson et al.~\cite{Aaronson}  also gave a quantum algorithm that evaluates the Forrelation between Boolean functions given their quantum oracles. The $k-$fold  Forrelation ($\Phi$) between $k$ Boolean functions $f_1, f_2 \dots f_k$ is given by
\begin{equation}\label{forrelation}
\Phi_{f_1, f_2 \dots f_k} \,=\,  \frac{1}{2^{(k+1)n/2}}\sum_{x_1,x_2 \dots x_k \in \F_2^n} \,  f_1(x_1) (-1)^{x_1 \cdot x_2} f_2(x_2) \dots (-1)^{x_{k-1} \cdot x_k} f_k{x_k}.
\end{equation}
The quantum algorithm evaluating this quantity calls each of the Boolean function oracles once.

Datta et al.~\cite[Lemma 3]{Datta} used the quantum algorithm for 3-fold Forrelation to verify the $m-$resilience of a general Boolean function $g(x)$. This algorithm evaluates 3-fold Forrelation $\Phi_{g,h,g}$ where $h(x)$ is a symmetric Boolean function such that $h(x) = 1 \;\forall\; x \in \F_2^n \,:\, wt(x) > m$. If $g(x)$ is $m-$resilient, the final state of the algorithm is always $\ket{0_n}$.

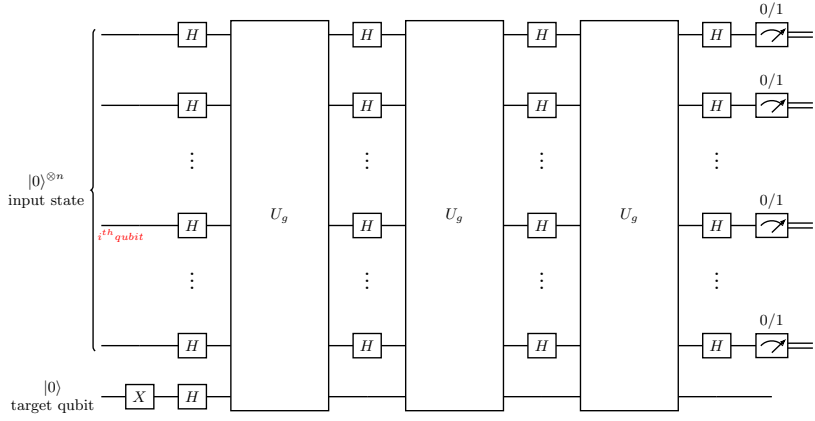
\begin{figure}[htb]
\centering
\scalebox{0.65}{
\begin{quantikz}
& \lstick[wires=6]{$\ket{0}^{\otimes n}$ \\ \small{input state}} & \qw  & \gate{H} & \gate[wires=7, nwires={3,5}][2cm]{U_g} & \gate{H} & \gate[wires=7, nwires={3,5}][2cm]{U_g} & \gate{H} & \gate[wires=7, nwires={3,5}][2cm]{U_g} & \gate{H} & \meter{$0/1$} & \cw  \\
&  & \qw  & \gate{H} &  & \gate{H} &  & \gate{H} &  & \gate{H} & \meter{$0/1$} & \cw  \\
&  &  & \Large{\vdots} &  & \Large{\vdots} &  & \Large{\vdots} &  & \Large{\vdots} &  &  \\
&  & \qw{\textcolor{red}{i^{th} qubit}}  & \gate{H} &  & \gate{H} &  & \gate{H} &  & \gate{H} & \meter{$0/1$} & \cw  \\
&  &  & \Large{\vdots} &  & \Large{\vdots} &  & \Large{\vdots} &  & \Large{\vdots} &  &  \\
&  & \qw  & \gate{H} &  & \gate{H} &  & \gate{H} &  & \gate{H} & \meter{$0/1$} & \cw  \\
& \lstick{$\ket{0}$ \\ \small{target qubit}} & \gate{X} & \gate{H} &  & \qw &  & \qw &  & \qw & \qw &  \\
\end{quantikz}}
\caption{\footnotesize{The Forrelation algorithm, if $f(x)$ satisfies SAC, all measurements yield the result `$0$'}. $g(z) = f'_c(x)$, and $h(x)$ is the quantum oracle for the $n$-input NOR gate.}
\label{forrel_algo}
\end{figure}

The balanced nature of $f'_c(x)$ also implies that it can be considered to be $0-$resilient. This can be verified setting $g = f'_c$ and using the appropriate $h(x)$ which in this case is just the $n$ variable NOR gate. The circuit diagram for an iteration of this algorithm is given in \autoref{forrel_algo}.

SAC can therefore be verified by running the algorithm for each of the derivative function. It can be seen that the resultant measurement of the Forrelation based algorithm is the complement of the approach based on the Deutsch-Josza algorithm.

\section{Complexity analysis of the algorithms}\label{complexity}

Since quantum algorithms yield probabilistic results, it also becomes necessary to sample the outcome of a particular algorithm several times to obtain the necessary statistics. Therefore the time complexity and the sampling complexity both become relevant. 

In general, a quantum algorithm may contain one or more quantum circuits and these circuits are implemented several times to obtain statistical results. The quantum circuit can also contain pre-defined objects like oracles whose internal structure is not considered in the complexity analysis. The complexity analysis of quantum algorithms shall use the following definitions.

\begin{defn}[Time Complexity]\label{time}
The time complexity of a quantum algorithm concerns the time required to implement/execute quantum circuit corresponding to the algorithm is represented by a pair of numbers $(m,k)$. Here, $m$ is the number of times pre-defined objects such as oracles are implemented (called) in the quantum circuit and $k$ is the number of gate operations (outside of the oracles) used in the circuit.
\end{defn}     

\begin{defn}[Sampling Complexity]\label{sample}
The sample complexity is given by a single number $q$, which is the minimum number of times a particular quantum circuit needs to be implemented and measured (queried) in order to obtain the desired results. This number is typically a written as a function of error/confidence bounds.  
\end{defn} 

In addition to this there is also the need to identify the minimum number of qubits that are required for the implementation of the quantum circuit. This measure is equivalent to the classical notion of space complexity of algorithms.  

In case any of these values scale with the input size of the quantum algorithm, these numbers can be replaced by appropriate asymptotic forms. The time complexity can be directly calculated by inspecting the quantum algorithm. sample complexity on the other hand can be estimated using several methods. In the discussion presented here, the sample complexity is estimated using the Hoeffding's inequality. 

\begin{prop}[Hoeffding's inequality]\label{inequality}
Given a random variable $X$ in the range $[a,b]$ and $m$ independent samples of $X$ given by $X_1, X_2, \ldots , X_m$. The expectation value is given by $E[x] = \mu$ and the mean of the samples is given by $$\overline{X} = \frac{1}{m} \, \sum_{i = 1}^m \, X_i$$ Hoeffding's inequality gives a bound on the value of $\overline{X}$  as $$\Pr[ \overline{X} - t \leq \mu \leq \overline{X} + t ] \geq 1 - 2 \, \exp(-\frac{2 m t^2}{(b - a)} ).$$   
\end{prop}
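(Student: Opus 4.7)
The plan is to reduce the two-sided tail probability $\Pr[|\overline{X}-\mu|\geq t]$ to two one-sided bounds via a union bound, and then apply the standard Chernoff-style exponential moment method to each. Write $Y_i = X_i - \mu$, so the $Y_i$ are independent, centered, and bounded in an interval of length $b-a$. Then
\begin{equation*}
\Pr\!\left[\overline{X} - \mu \geq t\right] \;=\; \Pr\!\left[\sum_{i=1}^m Y_i \geq mt\right] \;=\; \Pr\!\left[e^{s\sum_{i=1}^m Y_i} \geq e^{smt}\right]
\end{equation*}
for any $s>0$. Applying Markov's inequality and using independence of the $Y_i$ gives
\begin{equation*}
\Pr\!\left[\overline{X}-\mu \geq t\right] \;\leq\; e^{-smt}\prod_{i=1}^{m}\mathbf{E}\!\left[e^{sY_i}\right].
\end{equation*}

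The next step is to control each factor $\mathbf{E}[e^{sY_i}]$ via Hoeffding's lemma: for a centered random variable $Y$ supported in an interval of length $b-a$, one has $\mathbf{E}[e^{sY}] \leq \exp\!\left(s^{2}(b-a)^{2}/8\right)$. This is the place where I expect the main technical work: Hoeffding's lemma is proved by using convexity of $y \mapsto e^{sy}$ to upper-bound $e^{sY}$ by the chord joining its endpoints, taking expectations, and then showing that the resulting log-moment-generating function $\psi(s)$ satisfies $\psi(0)=\psi'(0)=0$ and $\psi''(s)\leq (b-a)^2/4$ by recognizing it as the variance of a shifted Bernoulli-like random variable, after which Taylor's theorem yields the desired Gaussian-type bound.

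Substituting Hoeffding's lemma into the product gives
\begin{equation*}
\Pr\!\left[\overline{X}-\mu \geq t\right] \;\leq\; \exp\!\left(-smt + \tfrac{m s^{2}(b-a)^{2}}{8}\right),
\end{equation*}
and the right-hand side is minimized by choosing $s = 4t/(b-a)^{2}$, which produces the one-sided bound $\exp\!\left(-2mt^{2}/(b-a)^{2}\right)$. An identical argument applied to $-Y_i$ (or equivalently, swapping the roles of the endpoints) yields the symmetric bound $\Pr[\mu - \overline{X} \geq t] \leq \exp\!\left(-2mt^{2}/(b-a)^{2}\right)$.

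Finally, combining the two one-sided tails by a union bound gives
\begin{equation*}
\Pr\!\left[|\overline{X}-\mu|\geq t\right] \;\leq\; 2\exp\!\left(-\frac{2mt^{2}}{(b-a)^{2}}\right),
\end{equation*}
and taking the complementary event yields the stated inequality. The genuinely nontrivial ingredient is Hoeffding's lemma; the rest is bookkeeping around Markov's inequality, independence, and optimization of the free parameter $s$. I would also remark that the stated denominator $(b-a)$ in the proposition should read $(b-a)^{2}$ for the bound to be dimensionally correct and to match the classical form.
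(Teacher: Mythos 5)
Your proof is correct and is the standard argument: Chernoff's exponential-moment bound, Hoeffding's lemma for each centered summand, optimization over the free parameter $s$, and a union bound over the two tails. The paper itself offers no proof of this proposition --- it is stated as a known result with a citation to Hoeffding (1963), and the argument in that reference is essentially the one you give --- so there is no divergence of approach to report.

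Your closing remark deserves emphasis: the exponent in the stated bound should indeed be $-2mt^{2}/(b-a)^{2}$, not $-2mt^{2}/(b-a)$. This is not merely cosmetic in the context of the paper, since the statement is later applied with $a=-2^{n}$, $b=2^{n}$ to derive the classical sample complexity $m = \frac{2^{n}}{t^{2}}\log\bigl(\frac{2}{\delta}\bigr)$; with the correct exponent $(b-a)^{2}=2^{2n+2}$ that figure becomes proportional to $2^{2n}/t^{2}$ (though one would normally take $t$ proportional to $2^{n}$ there, since the bias ranges over $[-2^{n},2^{n}]$, which restores a dimensionless bound). The quantum case with $\varepsilon^{(i)}\in\{0,1\}$, where $b-a=1$, is unaffected.
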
  

\subsection{Complexity of the Classical algorithm}

In the case of the classical algorithm, The distance $(\epsilon)$ estimation is done through the estimated values of the bias of the derivative functions $\tilde{\epsilon}_a$ and these quantities are the sample means of their respective sampling experiments, the accuracy of these estimates can be determined using the Hoeffding's inequality with $b = 2^n$ and $a = -2^n$. This can be expressed as follows:

\begin{equation}\label{class_sample complexity}
\Pr[\tilde{\epsilon}_a - t \leq \epsilon_a \leq \tilde{\epsilon}_a + t] \,\geq\, 1 - 2\,\exp(-\frac{2mt^2}{2^{n+1}})\;;\; t > 0 \;\text{and}\; m \in \Z^+
\end{equation}

Therefore, for a confidence interval of $\delta$ that the sample mean is within an error margin of $t$ to the actual mean, the number of samples $m$ required is given by $$m \,=\, \frac{2^n}{t^2}\log(\frac{2}{\delta})$$

It can be seen that in the classical algorithm, for each sample value the Boolean function oracle is called twice, therefore the query complexity is $2n$.   

\subsection{Complexity of the QSAC algorithm}

Using definition~\ref{inequality}, the time complexity for a single iteration of the QSAC algorithm is given by $(1, 2n+4)$. This shows that the number of additional gates required per iteration is $\cO(n)$.

Let $\ket{\varepsilon^{(i)}} \;\forall\;  i \in \Z^+ :  1 \leq i \leq n $ denote the state of the target qubit after the $i^{\text{th}}$ iteration. Therefore, the probability distribution of the associated random variable $\varepsilon^{(i)}$ is the same as given in \eqref{sac_result}. We prove the following theorem.

\begin{theorem}\label{sample_complexity}
Let the probability $\Pr[\varepsilon^{(i)} = 1] = p_i $, and the result of performing $m$ trials of the algorithm \eqref{main-algo-1} are represented by the random variables $\varepsilon^{(i)}_k, \mbox{ for all } k \in \Z^+, 1 \leq k \leq m$. Consider the sample mean given by:

\begin{equation}\label{sample_mean}
X^i \,=\,  \frac{1}{m} \; \sum_{k = 1}^{m} \varepsilon^{(i)}_k.
\end{equation}

Then for a given margin of error $t > 0$
we have 
$$\Pr[ X^i - t \leq p_i \leq X^i + t ] \geq 1 - 2 \, \exp(-2 m t^2).$$
\end{theorem}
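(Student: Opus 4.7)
The plan is to recognize that this is essentially a direct application of Hoeffding's inequality (Proposition~\ref{inequality}) to the measurement outcomes of the QSAC circuit, once we correctly identify the range and expectation of the underlying random variables.

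First, I would argue that each measurement outcome $\varepsilon^{(i)}_k$ is a Bernoulli random variable. By equation~\eqref{sac_result}, the target qubit after the circuit of \eqref{main-algo-1} (using the gate $CZ^i$) collapses under a standard-basis measurement to either $0$ or $1$, with $\Pr[\varepsilon^{(i)} = 1] = p_i = \sum_{w : w_i = 0}\widehat{f}(w)^2$. Hence $\varepsilon^{(i)}_k \in \{0,1\} \subseteq [0,1]$ and $\mean{\varepsilon^{(i)}_k} = p_i$ for every $k$.

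Second, I would justify that the $m$ samples $\varepsilon^{(i)}_1,\ldots,\varepsilon^{(i)}_m$ are mutually independent: each trial of the algorithm reinitializes the register to $\ket{0_n}\ket{0}$ and runs an independent execution of the circuit in \eqref{main-algo-1}, so the quantum measurements produce i.i.d.\ outcomes. This lets us invoke Hoeffding's inequality as stated in Proposition~\ref{inequality} on the sample mean $X^i$ defined in \eqref{sample_mean}.

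Third, I would apply Proposition~\ref{inequality} with $a = 0$, $b = 1$, and $\mu = p_i$. Substituting $b - a = 1$ into the bound $1 - 2\exp\!\left(-\dfrac{2mt^2}{(b-a)}\right)$ directly yields
\[
\Pr[\,X^i - t \,\leq\, p_i \,\leq\, X^i + t\,] \;\geq\; 1 - 2\exp(-2mt^2),
\]
which is the claim. There is no genuine obstacle here; the only subtle point worth stating explicitly is the independence of successive quantum trials, since this is what permits Hoeffding to apply unchanged from its classical statement.
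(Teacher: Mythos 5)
Your proof is correct and follows essentially the same route as the paper: identify $\varepsilon^{(i)}$ as a $0/1$ (Bernoulli) random variable so that its expectation equals $p_i$, then apply Hoeffding's inequality (Proposition~\ref{inequality}) with $b-a=1$. Your version is actually more careful than the paper's, since you make explicit the independence of the $m$ reinitialized trials and the substitution $a=0$, $b=1$, which the paper leaves implicit.
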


\begin{proof}
Since $\varepsilon^{(i)}$ is $0/1$ random variable, the expectation value of this variable is equal to the probability $\Pr[\varepsilon^{(i)} = 1] = p_i$. The probability inequality given in the theorem is due to the Hoeffding's inequality~\cite{Hoeffding1963}. \qed  
\end{proof}

The number of trials ($m$), required for each iteration in order to estimate $p_i$ is determined by the degree of uncertainty $\delta : \delta \in [0,1]$ and the margin of error $t$. We have
$$\Pr[ X^i - t \leq p_i \leq X^i + t ] \equiv  1 - \delta   \geq 1 - 2 \, \exp(-2 m t^2),$$
which renders $m  =  \frac{1}{2 t^2} \ln(\frac{2}{\delta})$, this is the sample complexity of the QSAC algorithm. It can be seen that the sample complexity does not scale with the input size of the algorithm. This implies that the execution time of one iteration of the algorithm is $\cO(n)$ therefore the complete algorithm ($n$ iterations) can be said to have a complexity of $\cO(n^2)$.

\subsection{Comparison with the alternatives}

Using the same analysis as above, the time complexity of an instance of the direct algorithm evaluates to $(2, 2n+4)$. Similarly the time complexity of the Forrelation algorithm is $(5, 4n+6)$, there are 5 oracle calls because each call to the derivative of the Boolean function requires two calls to the function's own oracle $U_f$. 

While evaluating the sample complexity using Hoeffding's inequality, it should be remembered that since $n$-qubit measurements are required at the end of these algorithms,  the value of $(b-a)$ from                       Proposition~\autoref{inequality} is $2^n - 1$. This implies the sample complexity for these algorithms is: $$q \,=\, \frac{2^n - 1}{2t^2} \, \ln(\frac{2}{\delta}) \,=\, \cO\left(\frac{2^n}{2t^2} \, \log(\frac{2}{\delta})\right)$$ The sample complexity in this case scales with exponentially with the input size of the algorithm. This is the primary drawback of these two alternatives when compared against the QSAC algorithm. The exponential scaling of the sample complexity is present in any algorithm that requires $n$-qubit measurements for  obtaining results.

Another important parameter to consider while comparing quantum algorithms is the number of qubits required for implementation of the circuit. This number has a practical relevance to the scalability of a given algorithm. This number is the minimum required qubits because the Boolean function oracles for functions of degree grater than $3$ require additional qubits for their implementation. It should be noted that the minimum number of qubits required for the QSAC, direct and Forrelation algorithms is $n+1$. 

The autocorrelation algorithm on the other hand requires $2n + 3$ qubits. This implies running the autocorrelation algorithm requires a quantum computer with twice as many qubits as required for implementing any other variant. The time complexity of the autocorrelation algorithm is $(2, 5n + 6)$. The sample complexity however is identical to the QSAC algorithm and doesn't scale with the input size of the function. 
  
The various complexity parameters all the algorithms discussed so far summarized in \autoref{Comparison}.

\begin{table}[htb]
\centering
\caption{\footnotesize{Table comparing the complexity values and qubit requirements between the different approaches}}
\scalebox{0.9}{
\begin{tabular}[c]{| P {10em}| P {10em}| P {10em}| P {10em}|}
\hline Algorithm variant & Query Complexity & Sample complexity $(q)$ & Qubits \\
\hline\hline Classical & $2n$ & $\frac{2^n}{t^2} \, \log(\frac{2}{\delta})$ & $-$ \\ 
\hline QSAC & $n$ & $\frac{1}{2t^2} \, \log(\frac{2}{\delta})$ & $n+1$ \\
\hline Direct \cite{Deutsch} & $2n$ & $\frac{2^n-1}{2t^2} \, \log(\frac{2}{\delta})$ & $n+1$ \\
\hline Forrelation \cite{Datta} & $5n$ & $\frac{2^n-1}{2t^2} \, \log(\frac{2}{\delta})$ & $n+1$ \\
\hline Autocorrelation \cite{BeraMT19} & $2n$ & $\frac{1}{2t^2} \, \log(\frac{2}{\delta})$ & $2n + 3$ \\
\hline
\end{tabular}}
\label{Comparison}
\end{table} 

It can be seen that the QSAC algorithm has the best values for each of these figures of merit when compared agains the other quantum algorithms presented here. It also has an exponential improvement in terms of the sampling complexity over the classical algorithm. This is speed up is a result of quantum parallelism that allows one to simultaneously query a Boolean function over all input values. Additionally, all relevant spectral information of the Boolean function was transferred to a single qubit measurement by taking advantage of quantum entanglement.

Also, these comparisons presented are only with reference to the SAC problem. The theoretical versatility of the Forrelation based algorithms  and the autocorrelation algorithm is apparent.

\section{Conclusion}

The QSAC algorithm presented here is capable of verifying SAC in Boolean functions. The algorithm contains $n$ iterations, each of which uses $\cO(n)$ gates in addition to a single call to the Boolean function oracle. The sample complexity of the algorithm is $\cO\left(\frac{1}{2t^2} \, \log(\frac{2}{\delta})\right)$ which is dependent only on the error and confidence and error bounds. From a practical standpoint, The QSAC algorithm is also the least susceptible to gate error and most capable of being adapted in to an quantum error correction framework. All of these features are a consequence of the lightweight nature of the algorithm. This however comes at a cost of versatility. Unlike the autocorrelation and forrelation based algorithms, the QSAC is designed with the sole purpose of addressing the SAC problem alone.

The complexity analysis considered here are problem independent. It can therefore be said that a lightweight quantum algorithm that solves a problem with the minimum possible number of gates, oracle calls and final measurements is likely to have properties similar to the QSAC algorithm and such algorithms are more beneficial when one considers the actual implementation on a quantum processor.

The analysis presented here has also resulted in the identification of figures of merits that can be used to compare quantum algorithms with each other.      
   
{\bf Acknowledgement:}
Research of C. A. Jothishwaran, Vishvendra Singh Poonia, and Sugata Gangopadhyay is a part of the project “Design and Development of Quantum Computing Toolkit and Capacity Building” sponsored by the Ministry of Electronics and Information Technology (MeitY) of the Government of India.

\bibliographystyle{splncs03}

\end{document}